\title{SIS epidemics on open networks: A replacement-based approximation
}
\author{Renato Vizuete, Paolo Frasca, and Elena Panteley
\thanks{This work was supported by F.R.S.-FNRS via the \emph{KORNET} project and via the Incentive Grant for Scientific Research (MIS) \emph{Learning from Pairwise Comparisons}, and by the \emph{RevealFlight} Concerted Research Action (ARC) of the Fédération Wallonie-Bruxelles and in part by the Agence Nationale de la Recherche (ANR) via grants HANDY ANR-18-CE40-0010 and COCOON ANR-22-CE48-0011.}
\thanks{R.~Vizuete is with ICTEAM institute, UCLouvain, B-1348, Louvain-la-Neuve, Belgium. R.~Vizuete is a FNRS Postdoctoral Researcher - CR. 
P.~Frasca is with Univ.\ Grenoble Alpes, CNRS, Inria, Grenoble INP, GIPSA-lab, F-38000 Grenoble, France. 
E.~Panteley is with 
Université Paris-Saclay, CNRS, CentraleSupélec, Laboratoire 
des signaux et systèmes, 91190, Gif-sur-Yvette, France.
{\tt\small renato.vizueteharo@uclouvain.be},
{\tt\small paolo.frasca@gipsa-lab.fr},\protect\\
{\tt\small elena.panteley@l2s.centralesupelec.fr}.}
} 
\newcommand{\RVnew}[1]{{\color{black}#1}}
\newcommand{\Gcal}{\mathcal G}
\newcommand{\Vcal}{\mathcal V}
\newcommand{\Ecal}{\mathcal E}
\newcommand{\vertiii}[1]{{\left\vert\kern-0.25ex\left\vert\kern-0.25ex\left\vert #1 
    \right\vert\kern-0.25ex\right\vert\kern-0.25ex\right\vert}}
\newcommand{\Var}{\, \mathrm{Var}  }    
\newtheorem{definition}{Definition}
\newtheorem{assumption}{Assumption}
\newtheorem{theorem}{Theorem}
\newtheorem{proposition}{Proposition}
\newtheorem{lemma}{Lemma}
\newtheorem{remark}{Remark}
\newcommand{\prt}[1]{\left(#1\right)}
\newcommand{\brk}[1]{\left[#1\right]}
\newcommand{\brc}[1]{\left\{#1\right\}}
\newcommand{\abs}[1]{\left|#1\right|}
\newcommand{\norm}[1]{\abs{\abs{#1}}}
\newcommand{\inProd}[2]{\langle#1,#2\rangle}
\newcommand{\R}{\mathbb R}
\newcommand{\Ep}[1]{\mathbb{E}\left[#1\right]}
\begin{document}

\maketitle
\thispagestyle{empty}

\begin{abstract}
In this paper we analyze continuous-time SIS epidemics subject to arrivals and departures of agents, by using an approximated process based on  replacements. In defining the SIS dynamics in an open network, we consider a stochastic setting in which arrivals and departures take place according to Poisson processes with similar rates, and the new value of the infection probability of an arriving agent is drawn from a continuous distribution. Since the system size changes with time, we define an approximated process, in which replacements take place instead of arrivals and departures, and we focus on the evolution of an aggregate measure of the level of infection. So long as the reproduction number is less than one, the long-term behavior of this function measures the impact of the changes of the set of agents in the epidemic. We derive upper bounds for the expectation and variance of this function and we include a numerical example to show that the approximated process is close to the original SIS process.
\end{abstract}

\section{Introduction}

Epidemic models are dynamical systems used to model contagion in a community of individuals. These types of models have long time been a key class of network dynamics and their interest has further grown in the last few years, specially due to the COVID-19 outbreak that impacted the entire world. 
One of the most important epidemic models is the SIS (Susceptible-Infected-Susceptible) compartmental epidemic model, which describes diseases where the agents, after being recovered, have no immunity against \linebreak
re-infection \cite{hethcote2000mathematics,kiss2017mathematics}. 

Most of the stability results derived using network models depend on the spectral radius of the adjacency matrix associated with the network, assuming that the composition of the system is fixed \cite{nowzari2016epidemics}. Nevertheless, in more realistic environments, individuals do not stay in a fixed place for a long time and are constantly moving between different locations. It is clear that mobility is one of the most important factors that need to be considered in the analysis of modern spreads of diseases \cite{alamo2022challenges}, and many works begin to consider this aspect to obtain more accurate results \cite{tizzoni2014use,hazarie2021interplay}. At the level of network models, time-varying networks have traditionally been used to model to some degree the mobility of agents by changing the interactions among them \cite{ogura2016stability,pare2018epidemic,leitch2019toward,zino2021analysis}. However, this type of models might  fail to represent modern networks where due to the development of transportation infrastructures, agents are constantly entering and leaving towns, cities and countries, which implies not only changes of the connections but also modifications of the infection probabilities associated with the agents. Other approaches have considered patchy environments where individuals move between layers and participate in the local dynamics \cite{abhishek2023sis}. Nevertheless, a complete knowledge of all the agents in the system and the transitions between populations might demand a lot of information, which limits the applicability of this approach.

The analysis of systems characterized by 
a time-varying set of agents is the object of study of \emph{open multi-agent systems}, where the dimension of the system may vary with time \cite{hendrickx2017open,franceschelli2021stability,vizuete2024trends}. At the level of large populations of individuals (e.g., countries, cities), it is common to have similar rates of arrivals and departures such that the variations of the size of the populations are almost negligible. However, even if the total number of individuals in a specific place could remain almost invariant, it is unrealistic to assume that exactly the same agents stay in a fixed location during all the evolution of the epidemic.
If we consider a similar rate of departures and arrivals, which preserve the number of individuals, epidemics under these conditions can be approximated as a dynamical system subject to replacements of individuals: a similar approach has been taken for opinion dynamics by \cite{carletti2008birth,torok2013opinions}. 

In this paper, we study a continuous time formulation of an SIS epidemic over a network of individuals when arrivals and departures of agents take place. We consider a stochastic setting where the time instants at which the events occur are determined by a homogeneous Poisson process and the infection probabilities of arrival agents are drawn from a continuous distribution.
Because the dimension of the system changes with time, the analysis of the epidemic is performed by using an aggregate function (i.e., scalar value) that measures the level of infection.
Due to the high complexity of the stochastic process, when the rates of arrivals and departures are similar, we propose an approximated version of the original process based on replacements. 
Upper bounds for the asymptotic values of the expectation and variance of the aggregate function are derived as functions of the parameters of the epidemic process.

\section{SIS model on closed networks}
We begin by recalling, in this section, the definition and main features of the classical SIS model on ``closed'' networks (i.e., networks whose node set does not change).

We consider a system composed by $n$ agents interacting through a connected undirected graph $\Gcal = (\Vcal,\Ecal)$ where the set of agents is given by $\Vcal=\{1,\ldots,n\}$ and the set of edges by $\Ecal\subseteq \Vcal\times \Vcal$. 
The SIS epidemic model over a network is given by \cite{nowzari2016epidemics}:
$$
\dot x_i(t)=-\delta x_i(t)+\sum_{j=1}^n a_{ij}\beta x_j(t)\prt{1-x_i(t)},
$$
where $x_i(t)\in[0,1]$ is the probability of infection of an individual at time $t$, $\delta$ is the recovery rate, $\beta$ is the infection rate, $a_{ji}=a_{ij}>0$ if there is an interaction between agents $i$ and $j$, and $a_{ij}=0$ otherwise.

The model of all the network can be expressed as:
\begin{equation}\label{eq:epidemics_network}
\dot x(t) = \prt{\beta A-\delta I}x(t)-\beta X(t)Ax(t),    
\end{equation}
where $I$ is the identity matrix, $A=[a_{ij}]$ is the adjacency matrix of the network, and $X(t)=\mathrm{diag}[x_1(t),\ldots,x_n(t)]$. Since the graph is undirected, $A$ is symmetric and its eigenvalues are denoted, in non-increasing order of magnitude as \linebreak $\lambda_1\ge\cdots\ge\lambda_n$.

\begin{proposition}[Stability \cite{lajmanovich1976deterministic}]\label{prop:stability_epidemics}
For the SIS epidemic \eqref{eq:epidemics_network}, the disease-free equilibrium $x=0$ is globally asymptotically stable if and only if
\begin{equation}\label{eq:stability_condition_graph}
\lambda_1\frac{\beta}{\delta}<1.    
\end{equation}
\end{proposition}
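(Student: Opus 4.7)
My plan is to prove the two implications separately, after first establishing that the hypercube $[0,1]^n$ is forward-invariant under the dynamics \eqref{eq:epidemics_network}. To see the invariance, I would check each face: if $x_i(t)=0$ then $\dot x_i(t)=\beta\sum_j a_{ij}x_j(t)\ge 0$, and if $x_i(t)=1$ then $\dot x_i(t)=-\delta<0$. Hence solutions starting from a probability vector remain probability vectors for all time, so the system may be analyzed on $[0,1]^n$.

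For the sufficiency (``$\lambda_1\beta/\delta<1\Rightarrow$ global asymptotic stability''), the main idea is to use the linear Lyapunov function $V(x)=v^\top x$, where $v$ is the Perron--Frobenius eigenvector of $A$ (which is positive and well-defined because the connected, undirected graph $\Gcal$ has an irreducible symmetric adjacency matrix). Differentiating along trajectories and using $v^\top A=\lambda_1 v^\top$, I obtain
\begin{equation*}
\dot V(x)=(\beta\lambda_1-\delta)\,v^\top x-\beta\, v^\top X A x.
\end{equation*}
Because $v,x,A,X$ are all entrywise nonnegative on the invariant set, the second term is nonpositive, so $\dot V\le(\beta\lambda_1-\delta)V$. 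Under the assumed condition $\beta\lambda_1-\delta<0$, Grönwall's inequality gives exponential decay of $V$, and since $v$ is strictly positive this implies $x(t)\to 0$ from every initial condition in $[0,1]^n$.

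For the necessity (``global asymptotic stability $\Rightarrow\lambda_1\beta/\delta<1$''), I would argue by contrapositive. If $\lambda_1\beta/\delta>1$, the Jacobian of the dynamics at the origin is $\beta A-\delta I$, whose largest eigenvalue $\beta\lambda_1-\delta$ is strictly positive; thus the origin is a linearly unstable equilibrium and cannot be globally asymptotically stable. The more delicate case is $\lambda_1\beta/\delta=1$; here I would exhibit a non-trivial nonnegative equilibrium or use the same Lyapunov function along the Perron eigenvector direction to show that $V$ fails to decrease strictly for initial conditions with $x_i\in(0,1)$, ruling out global attractivity of the origin.

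The main obstacle I anticipate is the boundary case $\lambda_1\beta/\delta=1$, which is not handled by linearization alone; resolving it cleanly requires exploiting the cooperative/monotone structure of the SIS vector field (for instance, using that trajectories are monotone in initial conditions on $[0,1]^n$, or invoking the standard Lajmanovich--Yorke construction of an endemic equilibrium bifurcating at the threshold). Everything else reduces to Perron--Frobenius theory plus a straightforward Lyapunov calculation.
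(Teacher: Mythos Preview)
The paper does not actually prove this proposition; it is cited from \cite{lajmanovich1976deterministic}. The only related calculation the paper carries out is the bound $\dot V\le 2(\beta\lambda_1-\delta)V$ for the \emph{quadratic} function $V(x)=\tfrac{1}{n}\|x\|^2$ (see \eqref{eq:original_dynamics_V}--\eqref{eq:mapping_f_norm}), which establishes sufficiency and is the estimate the rest of the paper relies on.

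Your argument is essentially the original Lajmanovich--Yorke proof and is correct. The linear Lyapunov function $v^\top x$ with $v$ the Perron eigenvector is sharper than the paper's quadratic one, since $v^\top A=\lambda_1 v^\top$ enters exactly rather than through a Rayleigh-quotient bound; conversely, the quadratic $V$ is what extends cleanly to the replacement computations in Sections~IV--VI, which is why the paper prefers it. One caution on the threshold case you flag: with $\lambda_1\beta=\delta$ your own function gives $\dot V=-\beta\, v^\top XAx\le 0$, and a LaSalle argument (irreducibility of $A$ forces any invariant set in $\{\dot V=0\}$ to be $\{0\}$) still yields global asymptotic stability. So the disease-free equilibrium is GAS at the threshold as well, and the strict inequality in the stated ``if and only if'' is slightly imprecise---the classical result reads $\lambda_1\beta/\delta\le 1$. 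That is a wrinkle in the statement as quoted, not a defect in your plan.
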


Since in open multi-agent systems the size of the system may change in time due to decoupled arrivals and departures \cite{hendrickx2017open,franceschelli2021stability,vizuete2024trends}, it is desirable to perform the analysis with the use of an aggregate function $V(x)$ that condensates the associated process in a scalar value, independently of the size of the system. In the case of SIS epidemics, a natural choice is the Lyapunov function $V:[0,1]^n\to\R_{\ge0}$:
\begin{equation}\label{eq:aggregate_function_norm}
V(x)=\frac{1}{n}\norm{x}^2, 
\end{equation}
where $\norm{\cdot}$ is the Euclidean norm. 
This Lyapunov function satisfies
\begin{equation}\label{eq:limit_V}
\lim_{t\to\infty}V(x(t))=0,
\end{equation}
when the epidemic is stable. 
Thanks to normalizing by the number of agents, $V(x)$ enjoys uniform bounds, independent of the size of the system:
\begin{equation}\label{eq:bounded_aggregate_function}
0\le V(x)\le 1.    
\end{equation}

In a closed system, where the evolution of the states of the agents is fully determined by   \eqref{eq:epidemics_network}, the dynamics of the aggregate function $V(x(t))$ can be upper bounded as follows (see \cite{khanafer2014infection}):
\begin{align}
    \dot{V}(x(t))&=\frac{2}{n}x^T(t)\dot x(t)\nonumber\\
    &=\frac{2}{n}x^T(t)\prt{\beta A-\delta I}x(t)-\beta x^T(t)X(t)Ax(t)\label{eq:original_dynamics_V}\\
    &\le \frac{2}{n}x^T(t)\prt{\beta A-\delta I}x(t)\nonumber\\
    &\le 2(\beta\lambda_1-\delta)V(x(t))\label{eq:mapping_f_norm},
\end{align}
where we used the fact that $x^T(t)X(t)Ax(t)\ge 0$ since $x_i\in[0,1]$. Clearly, the right-hand side of \eqref{eq:mapping_f_norm} is negative if
$\beta\lambda_1-\delta<0$,
which corresponds to the stability condition \eqref{eq:stability_condition_graph}. Finally, by applying the Comparison Lemma, we have that $V(x(t))$ satisfies:
\begin{equation}\label{eq:SIS_continuous_dynamics}
V(x(t))\le V_0+ 2(\beta\lambda_1-\delta) \int_0^t  V(x(\tau)) d\tau,
\end{equation}
where $V_0$ denotes the value of $V(x(t))$ at the time $t=0$. Similar bounds, in differential and integral form, will be derived for open SIS epidemics in the rest of this paper.

\section{\RVnew{Open SIS epidemics}}\label{sec:replacements_SIS}

In this section, we formulate a SIS epidemic where in addition to the continuous evolution of the states of the individuals according to \eqref{eq:epidemics_network}, the system also experiences arrivals and departures of individuals. Then, in the next section we will propose an alternative process, characterized by replacements in a fixed network, as an approximation in the case of similar rates of arrivals and departures.

\begin{definition}[Departure]
We say that an agent $j\in\Vcal(t^-)$ leaves the epidemic at time $t$ if:
$$
\Vcal(t^+)=\Vcal(t^-)\setminus \{j\},
$$
where $\Vcal(t^-)$ is the set of agents before the departure and $\Vcal(t^+)$ is the set of agents after the departure of agent $j$. Thus, $\abs{\Vcal(t^+)}=\abs{\Vcal(t^-)}-1$.
\end{definition}

\begin{assumption}[Departure process]\label{ass:departure_poisson_process}
The departure \! instants in the system are determined by a Poisson process $N_t^{(d)}$ with rate $\mu_d>0$ for $n>1$ and $\mu_d=0$ for $n=1$. During a departure, an agent $j$ in the current epidemic is chosen uniformly and removed from the system.
\end{assumption}

Notice that if there is only one agent in the epidemic, we keep this agent and we set to 0 the rate of the Poisson process for departures such that only arrivals can happen. In this way, the departure process becomes a slightly truncated version of a homogeneous Poisson process.

\begin{definition}[Arrival]
We say that an agent $j$ joins the epidemic at time $t$ if:
$$
\Vcal(t^+)=\Vcal(t^-)\cup \{j\},
$$
where $\Vcal(t^-)$ is the set of agents before the arrival and $\Vcal(t^+)$ is the set of agents after the arrival of agent $j$. Thus, $\abs{\Vcal(t^+)}=\abs{\Vcal(t^-)}+1$.
\end{definition}

\begin{assumption}[Arrival process]\label{ass:arrival_poisson_process}
The arrival instants are determined by a homogeneous Poisson process $N_t^{(a)}$ with rate $\mu_a>0$. 
During an arrival, an agent $j$ joins the epidemic with an infection probability $x_{j}$ determined by a random variable $\Theta$, which takes values according to a continuous distribution with  support in the interval $[0,1]$, with mean $m$ and variance $\sigma^2$. The new agent generates edges with each current node with probability $p$.
\end{assumption}

Notice that at any time the graphs are realizations of an Erd\H{o}s-Rényi graph and therefore $\lambda_1$ grows with the number of agents.
We consider that the recovery rate remains constant $\delta_n=\bar \delta$ and the infection rate is given by $\beta_n=n^{-1}\bar\beta$, which means that, as the graph grows in size, the healing rate (which depends on each individual) remains constant, whereas the infection rate decreases. This natural scaling law is also chosen
in \cite{vizuete2020graphon,gao2019spectral}. Indeed, on dense graphs this assumption means that in larger graphs,
even though there are more potential interactions, the average strength of the connections is suitably adjusted: this fact accounts for natural limitations in the rates
of contact between individuals. 

Similarly to \cite{vizuete2020graphon}, the objective of the aggregate function $V$ is to quantify the reaction of the epidemic to arrivals and departures that affect the evolution of the disease, such that any deviation from the equilibrium $x=0$ is due to the openness of the system.
The evolution of $V$ in this open setting is given by a stochastic differential equation (SDE) of the form:
\begin{equation}\label{eq:SDE_arrivals_departures}
dV=f(V(t))dt+g_a(V(t))dN_t^{(a)}+g_d(V(t))dN_t^{(d)},    
\end{equation}
where $f$ determines the dynamics of $V$ in continuous time, $g_a$ determines the change of $V$ during an arrival and $g_d$ determines the change of $V$ during a departure. This SDE is also known as a \emph{Poisson Counter driven Stochastic Differential Equation} \cite{brockett2009stochastic,brockett1999queueing}, and its solution is a cadlag function associated with a Piecewise-deterministic Markov Process (PDMP) such that it can also be studied using the theory of PDMP \cite{jacobsen2006point}.

Each realization of the stochastic process \eqref{eq:SDE_arrivals_departures} can be completely different, and the analysis of the asymptotic behavior is not appropriate because the limit $\lim_{t\to\infty}V(x(t))$ does not exist due to the continuous arrivals and departures. For this reason, we are interested in the statistical properties of \eqref{eq:SDE_arrivals_departures}, corresponding to $\Ep{V(x(t))}$. 
However, notice that the stochastic process \eqref{eq:SDE_arrivals_departures} is highly complex since the mapping $f$ determined by \eqref{eq:original_dynamics_V} is time-varying due to $A(t)$, and depends on the two Poisson process $N_t^{(a)}$ and $N_t^{(d)}$. In addition, due to the slightly truncation at $n=1$, the Poisson process $N_t^{(d)}$ is not homogeneous.

For this reason, we look for a more tractable mathematical process that approximates the behavior of the original process \eqref{eq:SDE_arrivals_departures}.

\section{Replacements as an approximation}

Due to the complexity of the stochastic process, we focus on the particular case of similar rates of arrivals and departures and we perform an analysis of the evolution of $V$ considering the following two approximations: 

\paragraph{Replacements}
In large populations, the rates of arrivals and departures are similar such that even if the individuals are constantly changing, the total number of agents remains the same in expectation. For this reason, we approximate arrivals and departures as replacements. 

\paragraph{Fixed network}
The connections of all the agents are given by an Erd\H{o}s-Rényi graph. This means that all the graph topologies generated in the stochastic process during arrivals have the same expectation $\bar A=p(\mathds{1}_n\mathds{1}^T_n-I_n)$, where $\mathds{1}_n$ is a vector of ones of size $n$. In addition, in large networks, the eigenvalues of the expected graph are close to those of the random graphs~\cite{chung2011spectra}. For this reason, we approximate the time-varying matrices $A(t)$ with a fixed matrix $\bar A$ of size $n_0$, where $n_0$ denotes the initial number of agents $n(t)$ at  $t=0$. 

These approximations are made formal by the following definitions.

\begin{definition}[Replacement]
We say that an agent $j\in\brc{1,\ldots,n_0}$ is replaced at time $t$ if:
$$
x^+=[x_1^-,\cdots,x_{j-1}^-,\RVnew{b_j},x_{j+1}^-,\cdots,x_n^-]^T,\qquad \RVnew{x_j^{-}\neq b_j},
$$
where $x^-=x(t^-)$ is the state of the system before the replacement, $x^+=x(t^+)$ is the state of the system after the replacement and \RVnew{$b_j$ is the probability of infection of the new agent.}
\end{definition}

\begin{assumption}\label{ass:invariant_connections}
The set of edges $\Ecal$ of the network remains invariant for all time $t$.
\end{assumption}

In the following proposition we analyze the variation of the aggregate function $V(x(t))$
 under a replacement event where the replaced agent is chosen uniformly (i.e., the probability of replacement of each agent is the same) and the value of the new agent after the replacement is determined by a \RVnew{random variable $\Theta$ following a} continuous distribution with support in the interval $[0,1]$, with mean $m$ and variance $\sigma^2$.

\begin{proposition}[Replacement]\label{prop:E_V_jumps}
During the replacement of an agent, the aggregate function $V(x(t))$ defined in \eqref{eq:aggregate_function_norm} satisfies:
\begin{equation}\label{eq:V_plus_epidemics}
\Ep{V(x(t^+))}\!-\!\Ep{V(x(t^-))}\!=\!\frac{1}{n_0} \prt{\sigma^2+m^2-\Ep{V(x(t^-))}}.    
\end{equation}
\end{proposition}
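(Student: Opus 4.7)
The plan is to condition on the state $x(t^-)$ just before the replacement event and compute the expected change of $V$ induced by two independent sources of randomness: the uniform choice of the agent $j \in \{1,\ldots,n_0\}$ that is replaced, and the new value $b_j$ drawn according to $\Theta$.

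First I would write $V(x(t^+))$ explicitly in terms of $V(x(t^-))$: since only the coordinate $j$ changes, and $V(x)=\frac{1}{n_0}\sum_{i} x_i^2$, we have
\begin{equation*}
V(x(t^+)) = V(x(t^-)) - \frac{1}{n_0}\,x_j(t^-)^2 + \frac{1}{n_0}\,b_j^2.
\end{equation*}
This is the only nontrivial algebraic step and it is immediate from the definition \eqref{eq:aggregate_function_norm}.

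Next, I take the conditional expectation given $x(t^-)$, using that $j$ is uniform on $\{1,\ldots,n_0\}$ and independent of $b_j \sim \Theta$. The uniform average of $x_j(t^-)^2$ over $j$ yields $\frac{1}{n_0}\sum_{i} x_i(t^-)^2 = V(x(t^-))$, while $\Ep{b_j^2}=\Ep{\Theta^2}=\sigma^2+m^2$ by the standard identity $\mathrm{Var}(\Theta)=\Ep{\Theta^2}-(\Ep{\Theta})^2$. Substituting gives
\begin{equation*}
\Ep{V(x(t^+))\mid x(t^-)} - V(x(t^-)) = \frac{1}{n_0}\bigl(\sigma^2 + m^2 - V(x(t^-))\bigr).
\end{equation*}

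Finally, I take the outer expectation over $x(t^-)$ to obtain \eqref{eq:V_plus_epidemics} by linearity. There is no real obstacle here: the proof is essentially bookkeeping, and the only subtlety is to make explicit that the two sources of randomness (the index $j$ and the value $b_j$) are independent of each other and of $x(t^-)$, which is built into Assumption~\ref{ass:arrival_poisson_process} transposed to the replacement setting. The factor $1/n_0$ arises precisely from the uniform choice of the replaced agent combined with the normalization in $V$.
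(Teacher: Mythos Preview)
Your proof is correct and follows essentially the same approach as the paper: both write $V(x(t^+))=V(x(t^-))-\tfrac{1}{n_0}x_j(t^-)^2+\tfrac{1}{n_0}\Theta^2$, average over the uniform choice of the replaced index to turn $x_j(t^-)^2$ into $V(x(t^-))$, and then take the full expectation. The only cosmetic difference is the order of conditioning (the paper conditions on both $x^-$ and $\Theta$ before averaging over the index, whereas you integrate out $j$ and $\Theta$ together), and the relevant assumption to cite is Assumption~\ref{ass:replacement_poisson_process} rather than Assumption~\ref{ass:arrival_poisson_process}.
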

\begin{proof}
Due to the space constraints, we avoid including the time dependence of $x(t)$ and the state dependence of $V(x(t))$. This abuse of notation will be made multiple times in the rest of the paper.
Let us assume that an agent $i$ is replaced. We begin by computing the conditional expectation of $V^+=V(x(t^+))$ given $x^-$ and the value of the replaced agent $\Theta$.  Then, it holds
\begin{align*}    \Ep{V^+|x^-,\Theta}&=\frac{1}{n_0}\sum_{i=1}^{n_0}\prt{\frac{1}{n_0}\sum_{j\neq i}(x_j^2)^-+\frac{1}{n_0}\Theta^2}\\
    &=\frac{1}{n_0^2}\sum_{i=1}^{n_0}\prt{\norm{x^-}^2-(x_i^2)^-+\Theta^2}\\
    &=\frac{1}{n_0}\norm{x^-}^2-\frac{1}{n_0^2}\norm{x^-}^2+\frac{\Theta^2}{n_0}\\
    &=V^--\frac{1}{n_0}V^-+\frac{\Theta^2}{n_0},
\end{align*}
where $V^-$ denotes $V(x(t^-))$.
Then, by computing the total expectation, we complete the proof.
\end{proof}

\begin{remark}[Other systems]
The proof of Proposition~\ref{prop:E_V_jumps} did not use equation \eqref{eq:epidemics_network}, only the definition of $V$. Therefore, the statement remains valid for other systems characterized by replacements of agents.
\end{remark}

Along this work, we make the following assumption about the occurrence of replacements in the evolution of the epidemic.

\begin{assumption}[Replacement process]\label{ass:replacement_poisson_process}
The replacements instants are determined by a homogeneous Poisson process with rate $\mu>0$. 
During a replacement, an agent $j\in\brc{1,\ldots,n_0}$ is chosen uniformly and it is assigned a new infection probability \RVnew{$b_j$} determined by a random variable $\Theta$, which takes values according to a continuous distribution with  support in the interval $[0,1]$, with mean $m$ and \linebreak variance $\sigma^2$.    
\end{assumption}

\subsection{Pure replacements ($\dot x(t)=0$)}

Since the replacements process is independent of the evolution of the epidemic given by \eqref{eq:epidemics_network}, we analyze the behavior of the aggregate function $V$ during replacements independently of the continuous time dynamics by considering $\dot x(t)=0$.  
Notice that the jumps of the state $x(t)$ corresponds to the jump of the aggregate function $V(x(t))$. In this case, the value of $V(x(t))$ is given by:
\begin{align*}
    V(x(t))
    &=V_0+\sum_{k=1}^{N_t} \Delta V_k
    =V_0+\sum_{k=1}^{N_t}\prt{V(x(T_k^+))-V(x(T_k^-))},
\end{align*}
where $\Delta V_k$ is the size of the jump of the function $V(x(t))$ at the $T_k$ jump time of the Poisson process $N_t$ and
$$
V(x(T_k^-))=\lim_{t\uparrow T_k} V(x(t)) \qquad\!\!\!\text{and}\qquad\!\!\! V(x(T_k^+))=\lim_{t\downarrow T_k} V(x(t)).
$$
The aggregate function satisfies $V(x(t^-))=V(x(t^+))$ for almost all $t$, except in a countable number of time jumps. Therefore, it can be rewritten as~\cite{privault2013stochastic}:
\begin{align*}
    V(x(t))&=V_0+\sum_{k=1}^{N_t} \prt{V(x(T_k^+))-V(x(T_k^-))}\\
    &=V_0+\int_0^t(V(x(\tau^+))-V(x(\tau^-)))dN_{\tau}.
\end{align*}
A homogeneous Poisson process satisfies $\Ep{N_t}=\mu t$, which gives us $\Ep{dN_t}=d\Ep{N_t}=\mu dt$ \cite{privault2013stochastic}. Then, it holds
$$
\Ep{V(x(t))}=V_0+\int_0^t\Ep{V(x(\tau^+))-V(x(\tau^-))}\mu d\tau,
$$
which \RVnew{is the integral version of the ODE:}
\begin{equation}\label{eq:ODE_jump_system}
\frac{d}{dt}\Ep{V(x(t))}=\mu\Ep{ V(x(t^+))-V(x(t^-))}.
\end{equation}

\begin{proposition}[Pure replacement process]\label{prop:aggregate_function_replacements}
Under Assumption~\ref{ass:replacement_poisson_process} and inter-event dynamics $\dot x=0$, the aggregate function $V$ defined in \eqref{eq:aggregate_function_norm} satisfies:
\begin{equation}\label{eq:pure_jumps}
   \Ep{V(x(t))}=\prt{V_0-\sigma^2-m^2}e^{-\frac{\mu}{n_0}t}+(\sigma^2+m^2). 
\end{equation}
\end{proposition}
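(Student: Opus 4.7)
The plan is to combine Proposition~\ref{prop:E_V_jumps} with the ODE \eqref{eq:ODE_jump_system} derived just above the statement, and then solve the resulting linear first-order ODE. Since we already have an expression for the expected jump size and an expression for the time derivative of $\Ep{V(x(t))}$ in terms of that jump size, the proof is essentially a substitution followed by integration.

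First, I would invoke Proposition~\ref{prop:E_V_jumps} to replace the expected jump $\Ep{V(x(t^+)) - V(x(t^-))}$ appearing in \eqref{eq:ODE_jump_system} by $\frac{1}{n_0}(\sigma^2 + m^2 - \Ep{V(x(t^-))})$. Since the process $V(x(t))$ is cadlag and jumps only at a countable set of times (the arrival times of the Poisson process), the left limit satisfies $\Ep{V(x(t^-))} = \Ep{V(x(t))}$ for almost every $t$, and in particular wherever we need to solve an ODE. Substituting this into \eqref{eq:ODE_jump_system} yields the closed scalar ODE
\begin{equation*}
\frac{d}{dt}\Ep{V(x(t))} = \frac{\mu}{n_0}\bigl(\sigma^2 + m^2 - \Ep{V(x(t))}\bigr),
\end{equation*}
with initial condition $\Ep{V(x(0))} = V_0$.

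Second, I would solve this linear first-order ODE. The equilibrium is $\sigma^2 + m^2$, and introducing the error $e(t) = \Ep{V(x(t))} - (\sigma^2 + m^2)$ gives $\dot e = -\frac{\mu}{n_0} e$ with $e(0) = V_0 - \sigma^2 - m^2$. Integrating and rearranging produces exactly \eqref{eq:pure_jumps}.

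The main (and only) subtlety is the justification of the ODE step itself, namely that $\frac{d}{dt}\Ep{V(x(t))}$ really equals the right-hand side of \eqref{eq:ODE_jump_system} for all $t$ and that the left-limit $\Ep{V(x(t^-))}$ can be replaced by $\Ep{V(x(t))}$. This follows from the fact that $V(x(t))$ is almost surely cadlag with bounded values in $[0,1]$ (by \eqref{eq:bounded_aggregate_function}), so $\Ep{V(x(t))}$ is itself a bounded function of $t$ that is continuous except on a set of measure zero; hence the pointwise ODE holds Lebesgue-a.e., which is enough to recover the closed-form solution via integration against the absolutely continuous exponential kernel. All the remaining work is routine manipulation rather than a genuine obstacle.
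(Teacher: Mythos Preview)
Your proposal is correct and follows essentially the same approach as the paper: substitute the expected jump size from Proposition~\ref{prop:E_V_jumps} into the ODE \eqref{eq:ODE_jump_system}, obtain the linear first-order ODE $\frac{d}{dt}\Ep{V(x(t))}=-\frac{\mu}{n_0}\Ep{V(x(t))}+\frac{\mu}{n_0}(\sigma^2+m^2)$, and solve it. The paper's proof is even terser than yours, so your extra remarks on why $\Ep{V(x(t^-))}$ may be replaced by $\Ep{V(x(t))}$ only add rigor rather than change the route.
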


\vspace{1mm}

\begin{proof}
The result is obtained by applying \eqref{eq:V_plus_epidemics} in \eqref{eq:ODE_jump_system}
$$
\frac{d}{dt}\Ep{V(x(t))}=-\frac{\mu}{n_0}\Ep{V(x(t))}+\frac{\mu}{n_0}(\sigma^2+m^2),
$$
and solving the ordinary differential equation (ODE).
\end{proof}

Proposition~\ref{prop:aggregate_function_replacements} shows that in a system subject only to replacements, the expected value of the aggregate function is bounded and its asymptotic value is:
\begin{equation}\label{eq:asymptotic_aggregate_only_replacements}
\lim_{t\to\infty}\Ep{V(x(t))}=\sigma^2+m^2,    
\end{equation}
which corresponds to the second moment of the distribution used to generate the new values of the replaced agents. The rate of the Poisson process $\mu$ only has an influence on the rate of convergence of $V(x(t))$ such that a large value of $\mu$ will guarantee a fast convergence. The asymptotic value of $V(x(t))$ is independent of the rate $\mu$.

\section{\RVnew{First moment}}

Now, we consider the behavior of the aggregate function in continuous time subject to replacements at time instants determined by a Poisson process. Since the solution to \eqref{eq:original_dynamics_V} is unique and the jumps are determined by a homogeneous Poisson process, the aggregate function $V(x(t))$ is given \linebreak by \cite{brockett2009stochastic}:
\begin{multline}\label{eq:integral_SDE_V}
   V(x(t))=V_0+\int_0^t \frac{2}{n_0}\big(x^T(\tau)(\beta \bar A\\-\delta I)x(\tau)-\beta x^T(\tau)X(\tau)\bar Ax(\tau)\big) d\tau + \int_0^t \Delta V_k dN_\tau,     
\end{multline}
which can be expressed as the SDE
\begin{multline}\label{eq:SDE_V}
\!\!\!\!\!\!   dV(x(t))= \frac{2}{n_0}\prt{x^T(t)\prt{\beta \bar A-\delta I}x(t)-\beta x^T(t)X(t)\bar Ax(t)} dt \\+  \Delta V_k dN_t. 
\end{multline}
Notice that by denoting $$\omega(x(t))=\frac{2}{n_0}\prt{x^T(t)\prt{\beta \bar A-\delta I}x(t)-\beta x^T(t)X(t)\bar Ax(t)},$$ \eqref{eq:integral_SDE_V} can be expressed as
$$
V(x(t))\!=\!V_0+\!\!{\int_0^{T_1}\!\!\omega(x(\tau)) d\tau+\!\!\int_{T_1}^{T_2}\!\!\omega(x(\tau)) d\tau+\cdots}+\!\sum_{k=1}^{N_t} \!\Delta V_k,    
$$
which shows that $V(x(t))$ jumps at the time instants $T_k$ determined by the Poisson process, corresponding to the sum, and between these time intervals, $V(x(t))$ evolves according to \eqref{eq:original_dynamics_V}, corresponding to the integrals.

\begin{theorem}[Expectation]\label{thm:limits_expectation_SIS_fixed}
Consider a SIS epidemic satisfying the stability condition \eqref{eq:stability_condition_graph}. Under Assumptions~\ref{ass:invariant_connections} and \ref{ass:replacement_poisson_process}, the aggregate function $V(x(t))$ defined in \eqref{eq:aggregate_function_norm} satisfies:
\begin{equation}\label{eq:SIS_limit}
\limsup_{t\to\infty}\Ep{V(x(t))}\le\frac{\mu(\sigma^2+m^2)}{\mu+2n_0 \bar\delta-2 \bar\beta p(n_0-1)}.
\end{equation}
\end{theorem}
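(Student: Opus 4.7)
The plan is to take expectations in the integral form \eqref{eq:integral_SDE_V}, bound the continuous drift using the same argument as \eqref{eq:mapping_f_norm}, bound the jump part via Proposition~\ref{prop:E_V_jumps} together with the intensity formula $\Ep{dN_t}=\mu\,dt$, and then conclude by a Grönwall-type comparison.

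First, I would take expectations term-by-term in \eqref{eq:integral_SDE_V}. For the drift, the key observation is that under Assumption~\ref{ass:invariant_connections} the matrix $\bar A = p(\mathds{1}_{n_0}\mathds{1}_{n_0}^T - I_{n_0})$ has largest eigenvalue $\lambda_1(\bar A)=p(n_0-1)$. Dropping the non-negative term $\beta x^T X\bar A x\ge 0$ (since all $x_i\in[0,1]$), exactly as in the chain \eqref{eq:original_dynamics_V}–\eqref{eq:mapping_f_norm}, yields the pointwise bound
\begin{equation*}
\omega(x(t)) \le 2\bigl(\beta \lambda_1(\bar A)-\delta\bigr) V(x(t))
= 2\Bigl(\tfrac{\bar\beta p(n_0-1)}{n_0}-\bar\delta\Bigr) V(x(t)),
\end{equation*}
after substituting the scalings $\beta=\bar\beta/n_0$ and $\delta=\bar\delta$ fixed in Section~\ref{sec:replacements_SIS}.

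Second, I would handle the jump integral $\int_0^t \Delta V_k\,dN_\tau$ using the fact that the Poisson process has deterministic intensity $\mu$, independent of the state, so $\Ep{\int_0^t \Delta V_k\,dN_\tau}=\mu\int_0^t\Ep{\Delta V_k}\,d\tau$ as in the derivation leading to \eqref{eq:ODE_jump_system}. By Proposition~\ref{prop:E_V_jumps}, $\Ep{\Delta V_k \mid x(\tau^-)}=\frac{1}{n_0}(\sigma^2+m^2-V(x(\tau^-)))$, so after taking total expectation and combining with the drift bound I obtain the integral inequality
\begin{equation*}
\Ep{V(x(t))} \le V_0 + \int_0^t \Bigl[-\alpha\,\Ep{V(x(\tau))} + \tfrac{\mu(\sigma^2+m^2)}{n_0}\Bigr] d\tau,
\end{equation*}
with $\alpha := 2\bar\delta - \tfrac{2\bar\beta p(n_0-1)}{n_0} + \tfrac{\mu}{n_0}$. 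The stability hypothesis \eqref{eq:stability_condition_graph}, i.e.\ $\bar\beta p(n_0-1)<n_0\bar\delta$, ensures $\alpha>\mu/n_0>0$.

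Finally, I would apply the Comparison Lemma (as done for \eqref{eq:SIS_continuous_dynamics}) to the affine scalar ODE $\dot y=-\alpha y+\mu(\sigma^2+m^2)/n_0$, whose unique globally attracting equilibrium is $y^\star = \mu(\sigma^2+m^2)/(n_0\alpha)$. Rewriting $n_0\alpha = \mu + 2n_0\bar\delta - 2\bar\beta p(n_0-1)$ produces the claimed right-hand side of \eqref{eq:SIS_limit}, and $\limsup_{t\to\infty}\Ep{V(x(t))}\le y^\star$ follows. The main technical step requiring care is the justification of $\Ep{\int_0^t \Delta V_k\,dN_\tau}=\mu\int_0^t\Ep{\Delta V_k}\,d\tau$: because the jump size $\Delta V_k$ depends on the current state $x(\tau^-)$ (through Proposition~\ref{prop:E_V_jumps}) while the counter $N_\tau$ is independent of $x(\tau^-)$, this identity follows from the compensator formula for marked Poisson processes, but one should either cite \cite{privault2013stochastic,brockett2009stochastic} or argue it via the PDMP generator to make the manipulation rigorous.
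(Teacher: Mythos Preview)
Your proposal is correct and follows essentially the same route as the paper: bound the drift by $2(\beta\bar\lambda_1-\delta)V$ after dropping the nonnegative term, use Proposition~\ref{prop:E_V_jumps} together with $\Ep{dN_t}=\mu\,dt$ for the jumps, and conclude via the Comparison Lemma applied to the resulting affine scalar equation. The only cosmetic differences are that the paper works with the differential form \eqref{eq:SDE_V} (invoking dominated convergence to swap $\mathbb E$ and $d$) rather than the integral form, and keeps $\bar\lambda_1$ symbolic until the very end instead of substituting $\bar\lambda_1=p(n_0-1)$ and $\beta=\bar\beta/n_0$ early as you do.
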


\vspace{1mm}

\begin{proof}
We take the expectation of both sides of \eqref{eq:SDE_V} and we obtain:
\begin{multline*}
    d\Ep{V}=\Ep{\frac{2}{n_0}\prt{x^T\prt{\beta \bar A-\delta I}x-\beta x^TX\bar Ax}} dt\\
   +\frac{\mu}{n_0} \prt{\sigma^2+m^2-\Ep{V}}dt, 
\end{multline*}
where we applied the Dominated Convergence Theorem to interchange the expectation and the derivative
$\Ep{d V}=d \Ep{V}$ and we used the fact that $\Ep{\Delta V}=\frac{1}{n_0} \prt{\sigma^2+m^2-\Ep{V}}$ according to Proposition~\ref{prop:E_V_jumps}. Then we obtain the following ODE:
\begin{align}
\dfrac{d}{dt}\Ep{V}&=\Ep{\frac{2}{n_0}\prt{x^T\prt{\beta \bar A-\delta I}x-\beta x^TX\bar Ax}}\nonumber\\ &\;\;\;\;+\frac{\mu}{n_0} \prt{\sigma^2+m^2-\Ep{V}}\nonumber\\
&\le 2(\beta\bar\lambda_1-\delta)\Ep{V}+\frac{\mu}{n_0} \prt{\sigma^2+m^2-\Ep{V}}\label{eq:expected_SDE}\\
&=\prt{\frac{2n_0 \beta\bar\lambda_1-2n_0 \delta-\mu}{n_0}} \Ep{V}+\frac{\mu(\sigma^2+m^2)}{n_0},\label{eq:SIS_ODE}    
\end{align}
where $\bar \lambda_1$ is the largest eigenvalue of $\bar A$.
By the Comparison Lemma, we can guarantee that $\Ep{V}$ is upper bounded by the solution of the right-hand side of \eqref{eq:SIS_ODE}. Then, it holds
\begin{align}
    \Ep{V}&\le \!e^{\prt{\frac{2n_0 \beta\bar\lambda_1-2n_0 \delta-\mu}{n_0}}t}\!\!\int_0^t \frac{\mu(\sigma^2\!+\!m^2)}{n_0}e^{-\prt{\frac{2n_0 \beta\bar\lambda_1-2n_0 \delta-\mu}{n_0}}\tau} d\tau\nonumber\\    
    &\quad +\Ep{V_0} e^{\prt{\frac{2n_0 \beta\bar\lambda_1-2n_0 \delta-\mu}{n_0}}t}\nonumber\\
    &=\prt{\Ep{V_0}-\frac{\mu(\sigma^2+m^2)}{\mu+2n_0 \delta-2n_0 \beta\bar\lambda_1}} \!e^{\prt{\frac{2n_0 \beta\bar\lambda_1-2n_0 \delta-\mu}{n_0}}t}\nonumber\\
    &\quad +\frac{\mu(\sigma^2+m^2)}{\mu+2n_0 \delta-2n_0 \beta\bar\lambda_1}\label{eq:solution_ODE_SIS}.
\end{align}
Finally, the result follows by taking the limit $t\to\infty$ in \eqref{eq:solution_ODE_SIS} and using $\bar\lambda_1=p(n_0-1)$, $\delta=\bar\delta$ and $\beta=n_0^{-1}\bar\beta$.
\end{proof}

Clearly, the bound \eqref{eq:SIS_limit} is a function of the parameters of the SIS epidemic (i.e., $\bar\beta$, $\bar\delta$ and $p$) and the parameters of the replacement process (i.e., $\mu$, $\sigma^2$ and $m^2$). Similarly to \eqref{eq:asymptotic_aggregate_only_replacements}, the bound is proportional to $\sigma^2+m^2$, but in this case the rate of the Poisson process $\mu$ also appears as a multiplication factor. The upper bound for the rate of convergence of the SIS epidemic in a closed system determined by \eqref{eq:mapping_f_norm} appears in the denominator of \eqref{eq:SIS_limit}, but this value is increased by the rate of the Poisson process $\mu$. From \eqref{eq:solution_ODE_SIS}, it can be seen that a large value of $\mu$ will increase the rate of convergence of $V(x(t))$, such that the convergence will be fast, but the asymptotic value will also increase. When $\mu\to\infty$, the bound \eqref{eq:SIS_limit} is given by $\sigma^2+m^2$, which coincides with the result of  Proposition~\ref{prop:aggregate_function_replacements} since the process will be characterized only by replacements.

Notice that the right-hand side in \eqref{eq:expected_SDE} corrresponds to the expectation of a SDE of the form:
\begin{equation}\label{eq:PCDSDE}
dV=f(V)dt+g(V)dN_t.    
\end{equation}
Unlike the SDE \eqref{eq:SDE_arrivals_departures} associated to the original process, the mapping $f$ is not time-varying and the Poisson process $N_t$ is homogeneous.

\section{Second moment}

Although the asymptotic behavior of the expected aggregate function $\Ep{V(x(t))}$ provides useful information of a system subject to the replacements of agents, it does not give any information about the deviation of single realizations from the expected value. To this purpose, it is necessary to analyze the variance of the aggregate function, denoted by Var$(V(x(t)))$. In this case, we use It\^o Lemma formulated for jump processes of the \linebreak form \eqref{eq:PCDSDE}. 

\begin{lemma}[It\^o Lemma \cite{brockett2009stochastic}]
For a continuous differentiable function $\phi:\R\to\R$:
$$
d \phi(V)=\inProd{\frac{d\phi}{dV}}{f(V)}dt+\brk{\phi(V+g(V))-\phi(V)}d N_t,
$$
where $f(V)$ and $g(V)$ are the mappings in the SDE \eqref{eq:PCDSDE} and $N_t$ is a Poisson process.
\end{lemma}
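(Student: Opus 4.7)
The plan is to derive the formula along sample paths by separating the continuous evolution of $V$ between jump times of $N_t$ from the discrete jumps themselves. Because the SDE \eqref{eq:PCDSDE} has no Brownian component, no It\^o correction term appears and the identity reduces to a pathwise statement that combines the ordinary chain rule on smooth pieces with a telescoping sum of jump contributions, rewritten as a stochastic integral with respect to the Poisson counting measure.

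Concretely, I would proceed in three steps. First, condition on a realization of $N_t$ so that the jump times $0 < T_1 < T_2 < \cdots$ are fixed; on each open interval $(T_k, T_{k+1})$ the process $V$ satisfies $\dot V = f(V)$, so by the classical chain rule
\begin{equation*}
\frac{d}{dt}\phi(V(t)) = \inProd{\frac{d\phi}{dV}(V(t))}{f(V(t))}.
\end{equation*}
Second, at each jump time $T_k \le t$ the state transitions from $V(T_k^-)$ to $V(T_k^-) + g(V(T_k^-))$, producing a jump of magnitude $\phi(V(T_k^-) + g(V(T_k^-))) - \phi(V(T_k^-))$ in $\phi(V)$. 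Third, accumulate the continuous contributions and the jump contributions over $[0,t]$ to obtain
\begin{equation*}
\phi(V(t)) = \phi(V_0) + \int_0^t \inProd{\frac{d\phi}{dV}(V(\tau))}{f(V(\tau))} \dtau + \sum_{k=1}^{N_t} \prt{\phi(V(T_k^-)+g(V(T_k^-))) - \phi(V(T_k^-))},
\end{equation*}
and recognize the sum as an integral with respect to the counting measure:
\begin{equation*}
\sum_{k=1}^{N_t} \prt{\phi(V(T_k^-)+g(V(T_k^-))) - \phi(V(T_k^-))} = \int_0^t \brk{\phi(V(\tau^-)+g(V(\tau^-))) - \phi(V(\tau^-))} dN_\tau.
\end{equation*}
Taking the formal differential of the resulting integral identity yields the stated expression.

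The main obstacle is bookkeeping rather than analytical depth: one must carefully distinguish left and right limits at the jump times so that the drift integral is evaluated at $V(\tau)$ (which coincides with $V(\tau^-)$ Lebesgue-almost everywhere thanks to the cadlag property of $V$), whereas the jump integrand must evaluate $\phi$ and $g$ at $V(\tau^-)$. I also need to invoke the fact that, almost surely, a homogeneous Poisson process has only finitely many jumps on any bounded interval, so that the sum-to-integral conversion is justified path by path. The continuous differentiability of $\phi$ then ensures that every term above is well-defined along each sample path.
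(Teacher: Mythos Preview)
The paper does not prove this lemma: it is quoted as an external result from \cite{brockett2009stochastic} and used as a black box in the proof of Proposition~\ref{prop:second_moment}. Your argument is the standard pathwise derivation for Poisson-driven SDEs without a diffusion term---chain rule on the deterministic pieces between jump times, explicit evaluation of the jump increments, and identification of the resulting sum with a stochastic integral against $dN_t$---and it is correct as a sketch. Since there is no proof in the paper to compare against, the only remark is that your proposal supplies more detail than the authors intended; they simply invoke the cited reference.
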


We first analyze the change of $\Ep{V^2(x(t))}$ during a replacement.

\begin{lemma}\label{lemma:jumps_V2}
During the replacement of an agent, the aggregate function $V(x(t))$ defined in \eqref{eq:aggregate_function_norm} satisfies:
\begin{align*}
  \Ep{V^2(x(t^+))}\!-\!\Ep{V^2(x(t^-))}\!\le\!-\frac{2}{n_0}\Ep{V^2(x(t^-))}\!+\!\frac{\Ep{\Theta^4}}{n_0^2}&\\
+\prt{\frac{2\Ep{\Theta^2}(n_0-1)+1}{n_0^2}}\Ep{V(x(t^-))},  
\end{align*}
where $\Theta$ is the new value of a replaced agent according to Assumption~\ref{ass:replacement_poisson_process}.
\end{lemma}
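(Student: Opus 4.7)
The plan is to mimic the derivation used in Proposition~\ref{prop:E_V_jumps}, but now tracking $V^2$. First I would fix a realization of the state $x^-$ and a realization of the new value $\Theta$, and compute the conditional expectation of $(V^+)^2$ given these, averaging only over the uniform choice of the replaced agent $i$. Since replacing agent $i$ changes only coordinate $i$, one has the clean identity
\[
V^+ \;=\; V^- + \frac{1}{n_0}\bigl(\Theta^2-(x_i^{-})^2\bigr),
\]
and squaring produces three terms: $(V^-)^2$, a cross term of order $1/n_0$, and a quadratic term of order $1/n_0^2$.

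Averaging over $i$ uniformly, the cross term collapses using the key identity $\frac{1}{n_0}\sum_i (x_i^{-})^2 = V^-$, yielding a contribution $\frac{2 V^-}{n_0}(\Theta^2 - V^-)$. The quadratic term expands into $\Theta^4$, $-2\Theta^2 (x_i^-)^2$, and $(x_i^-)^4$ pieces. The first two again telescope against $V^-$, but the term $\frac{1}{n_0^3}\sum_i (x_i^-)^4$ does not simplify in closed form—this is where the inequality enters. The step I expect to be the main (only) subtlety is controlling this fourth-power sum. Here I would use that $x_i \in [0,1]$ implies $x_i^4 \le x_i^2$, so $\frac{1}{n_0}\sum_i (x_i^-)^4 \le V^-$, giving the upper bound $\frac{V^-}{n_0^2}$.

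Collecting all the terms produces
\[
\Ep{(V^+)^2 \mid x^-,\Theta} \;\le\; \Bigl(1-\tfrac{2}{n_0}\Bigr)(V^-)^2 + \tfrac{2 V^-\Theta^2}{n_0} + \tfrac{\Theta^4}{n_0^2} - \tfrac{2\Theta^2 V^-}{n_0^2} + \tfrac{V^-}{n_0^2}.
\]
Finally I would take total expectation, using independence of $\Theta$ from $x^-$ (so $\Ep{\Theta^2 V^-}=\Ep{\Theta^2}\Ep{V^-}$ and similarly for the other product), and regroup the coefficient of $\Ep{V^-}$ as $\frac{2\Ep{\Theta^2}(n_0-1)+1}{n_0^2}$, which yields exactly the stated inequality after subtracting $\Ep{(V^-)^2}$ from both sides. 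The argument closely parallels the proof of Proposition~\ref{prop:E_V_jumps}, with the only genuinely new ingredient being the $x_i^4 \le x_i^2$ bound that turns the identity into an inequality.
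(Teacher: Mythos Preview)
Your proposal is correct and follows essentially the same route as the paper's proof: both write $V^+ = V^- + \tfrac{1}{n_0}(\Theta^2 - (x_i^-)^2)$, square, average over the uniform choice of the replaced index, use $x_i^4 \le x_i^2$ (since $x_i\in[0,1]$) to bound the fourth-power sum by $V^-$, and then pass to total expectation using the independence of $\Theta$ from $x^-$. The only difference is cosmetic---the paper expands all six terms of the square at once, while you group them as a cross term and a quadratic term---so there is nothing to add.
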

\begin{proof}
During the replacement of an agent $x_j$, the function $V^2$ satisfies:
\small
\begin{align*}
    (V^2)^+\!&=\prt{V^-+\frac{\Theta^2}{n_0}-\frac{(x_j^2)^-}{n_0}}^2\\
    &=(V^2)^-\!+\!\frac{\Theta^4}{n_0^2}\!+\!\frac{(x_j^4)^-}{n_0^2}\!+\!\frac{2\Theta^2V^-}{n_0} \!-\!\frac{2(x_j^2)^-V^-}{n_0}\!-\!\frac{2\Theta^2(x_j^2)^-}{n_0^2},
\end{align*}
\normalsize
where $(V^2)^+$ and $(V^2)^-$ denote $V^2(x(t^+))$ and $V^2(x(t^-))$ respectively.
Then, the conditional expectation is given by:
\small
\begin{align}
    \mathbb{E}\brk{\!(V^2)^+\!|x^-\!,\Theta}&=\frac{1}{n_0}\sum_{j=1}^{n_0}\Bigg((V^2)^-+\frac{\Theta^4}{n_0^2}+\frac{(x_j^4)^-}{n_0^2}+\frac{2\Theta^2}{n_0}V^-\nonumber\\
    &\quad-\frac{2(x_j^2)^-}{n_0}V^--\frac{2\Theta^2(x_j^2)^-}{n_0^2}\Bigg)\nonumber\\
    &\le (V^2)^-+\frac{\Theta^4}{n_0^2}+\frac{1}{n_0^3}\sum_{j=1}^{n_0}(x_j^2)^-+\frac{2\Theta^2}{n_0}V^-\nonumber\\
    &\quad-\frac{2(V^2)^-}{n_0}-\frac{2\Theta^2}{n_0^2}V^-\nonumber\\
    &=\!\prt{\frac{n_0\!-\!2}{n_0}}\!\!(V^2)^-
    \!+\!\frac{\Theta^4}{n_0^2}\!+\!\prt{\frac{1}{n_0^2}\!+\!\frac{2\Theta^2}{n_0}\!-\!\frac{2\Theta^2}{n_0^2}}\!V^-.\nonumber
\end{align}
\normalsize
Finally, we compute the total expectation to get the desired result.
\end{proof}

\begin{proposition}[Second moment]\label{prop:second_moment}
Consider a SIS epidemic satisfying the stability condition \eqref{eq:stability_condition_graph}. Under Assumptions~\ref{ass:invariant_connections} and \ref{ass:replacement_poisson_process}, the aggregate function $V(x(t))$ defined in \eqref{eq:aggregate_function_norm} satisfies:

\vspace{-6mm}

\small{
\begin{multline}\label{eq:second_moment}  \limsup_{t\to\infty}\Ep{V^2(x(t))}\!\le\! \\\frac{\mu\mathbb{E}\!\!\brk{\Theta^4}\!\!(\mu\!+\!2n_0 \bar\delta\!-\!2 \bar\beta p(n_0\!-\!1))\!+\!\mu^2\mathbb{E}\!\!\brk{\Theta^2}\!\!(2\mathbb{E}\!\!\brk{\Theta^2}\!\!(n_0\!-\!1)\!+\!1)}{2n_0(\mu+2n_0 \bar\delta-2 \bar\beta p(n_0-1))^2}.
\end{multline}
}
\normalsize
\end{proposition}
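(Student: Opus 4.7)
The plan is to mimic the proof of Theorem~\ref{thm:limits_expectation_SIS_fixed} but applied to $\phi(V) = V^2$ via the Itô Lemma for jump processes stated above, and then close the recursion by plugging in the bound on $\Ep{V}$ from the theorem.

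First I would apply the It\^o Lemma with $\phi(V)=V^2$ to the SDE \eqref{eq:SDE_V}, which yields
\begin{equation*}
dV^2 \;=\; 2V\,\omega(x(t))\,dt \;+\; \bigl[V^2(x(t^+))-V^2(x(t^-))\bigr]\,dN_t .
\end{equation*}
For the continuous drift, I would reuse the calculation in \eqref{eq:original_dynamics_V}--\eqref{eq:mapping_f_norm}: since $x^T X \bar A x \ge 0$ and $V\ge 0$, one has $V\,\omega(x) \le 2(\beta \bar\lambda_1-\delta)\,V^2$. For the jump part I would take expectations, note that the Poisson process contributes an intensity $\mu\,dt$, and invoke Lemma~\ref{lemma:jumps_V2} to replace $\Ep{V^2(x(t^+))-V^2(x(t^-))}$ by the explicit upper bound stated there. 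Interchanging expectation and $d/dt$ via Dominated Convergence (justified by \eqref{eq:bounded_aggregate_function}) produces the differential inequality
\begin{equation*}
\frac{d}{dt}\Ep{V^2} \le -\frac{2(\mu+C)}{n_0}\,\Ep{V^2} + \frac{\mu}{n_0^2}\Bigl(\Ep{\Theta^4}+\bigl(2\Ep{\Theta^2}(n_0-1)+1\bigr)\Ep{V}\Bigr),
\end{equation*}
where I abbreviate $C := 2n_0\bar\delta - 2\bar\beta p(n_0-1)$, which is positive by the stability assumption \eqref{eq:stability_condition_graph}.

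Next I would handle the coupling to $\Ep{V}$. This is the delicate step: the right-hand side is not autonomous in $\Ep{V^2}$ because it contains $\Ep{V(t)}$ as a time-varying forcing. My plan is to avoid solving a coupled system by taking $\limsup$ at the end: apply the Comparison Lemma to the linear inhomogeneous ODE with the \emph{time-dependent} forcing, then use $\limsup_{t\to\infty}\Ep{V(t)}\le \mu(\sigma^2+m^2)/(\mu+C)$ from Theorem~\ref{thm:limits_expectation_SIS_fixed} together with $\sigma^2+m^2=\Ep{\Theta^2}$. Because the homogeneous part has a negative eigenvalue $-2(\mu+C)/n_0$, the transient dies out and
\begin{equation*}
\limsup_{t\to\infty}\Ep{V^2} \le \frac{n_0}{2(\mu+C)}\cdot \frac{\mu}{n_0^2}\Bigl(\Ep{\Theta^4}+(2\Ep{\Theta^2}(n_0-1)+1)\cdot \frac{\mu\Ep{\Theta^2}}{\mu+C}\Bigr).
\end{equation*}
Collecting terms over the common denominator $2n_0(\mu+C)^2$ yields exactly the bound stated in \eqref{eq:second_moment}, once $C$ is expanded back to $2n_0\bar\delta - 2\bar\beta p(n_0-1)$.

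The main obstacle I anticipate is precisely this bootstrapping step: one must justify that substituting the asymptotic bound on $\Ep{V}$ into the forcing of a differential inequality is valid for the purpose of an asymptotic bound on $\Ep{V^2}$. The cleanest way is to fix $\varepsilon>0$, use the fact that $\Ep{V(t)} \le \mu\Ep{\Theta^2}/(\mu+C)+\varepsilon$ for all $t$ large enough, solve the resulting autonomous linear upper-bounding ODE explicitly (as in \eqref{eq:solution_ODE_SIS}), take $t\to\infty$, and then let $\varepsilon\to 0$. All remaining steps are routine calculations in line with the first-moment proof.
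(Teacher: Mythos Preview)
Your proof is correct and follows essentially the same approach as the paper: apply the It\^o Lemma to $\phi(V)=V^2$, bound the drift by $4(\beta\bar\lambda_1-\delta)V^2$, use Lemma~\ref{lemma:jumps_V2} for the jump part, and close the inequality via the first-moment bound from Theorem~\ref{thm:limits_expectation_SIS_fixed}. The only cosmetic difference is in how the coupling to $\Ep{V(t)}$ is discharged: the paper substitutes the explicit transient bound $\Ep{V(t)}\le d_V e^{a_V t/2}+m_V$ from \eqref{eq:solution_ODE_SIS}, integrates the resulting linear ODE in closed form, and then takes $t\to\infty$, whereas you use an $\varepsilon$-argument on the asymptotic bound---both lead to the same limit $-\,(c_V+b_V m_V)/a_V$, which is exactly \eqref{eq:second_moment}.
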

\begin{proof}
Thanks to It\^o Lemma with $\phi(V)=V^2$ we obtain:
\begin{align}
dV^2&\le \inProd{2V}{2(\beta\bar\lambda_1-\delta)V}dt+\prt{(V^2)^+-(V^2)^-}dN_t\nonumber\\
&=4(\beta\bar\lambda_1-\delta)V^2dt+\prt{(V^2)^+-(V^2)^-}dN_t.\label{eq:SDE_inequality_second_moment}
\end{align}
We take the expectation in \eqref{eq:SDE_inequality_second_moment} and we get:
\begin{align*}
d\Ep{V^2}\!\!&\le\! 4(\beta\bar\lambda_1\!-\!\delta)\Ep{V^2}\!dt\!+\!\prt{\Ep{(V^2)^+}\!-\!\Ep{(V^2)^-}}\!\mu dt\\
&=2\prt{\frac{2n_0 \beta\bar\lambda_1-2n_0 \delta-\mu}{n_0}}\Ep{V^2}dt\\
&\quad \!\!+\mu\!\!\prt{\frac{2\Ep{\Theta^2}\!(n_0-1)+1}{n_0^2}}\!\Ep{V}dt+\frac{\mu\Ep{\Theta^4}}{n_0^2} dt,
\end{align*}
where we use the Dominated Convergence Theorem to interchange the expectation and the derivative of $V^2$ and we apply Lemma~\ref{lemma:jumps_V2}. Let us denote \linebreak $a_{V}\!=\!2\prt{\frac{2n_0 \beta\bar\lambda_1-2n_0 \delta-\mu}{n_0}}$, $b_{V}\!=\!\mu\prt{\frac{2\Ep{\Theta^2}(n_0-1)+1}{n_0^2}}$ and $c_{V}\!=\!\frac{\mu\Ep{\Theta^4}}{n_0^2}$, so that we have
\begin{equation}\label{eq:ODE_second moment}
\frac{d}{dt}\Ep{V^2}\le a_{V}\Ep{V^2}+b_{V}\Ep{V}+c_{V}.    
\end{equation}
The solution of \eqref{eq:ODE_second moment} is given by:
\begin{align*}
\Ep{V^2}\!&\le e^{a_{V}t}\Ep{V_0^2}+e^{a_{V}t}\int_0^te^{-a_{V}\tau}\prt{b_{V}\Ep{V(x(\tau))}+c_{V}} d\tau  \nonumber\\
&=e^{a_{V}t}\Ep{V_0^2}\!+b_Ve^{a_{V}t}\!\!\!\int_0^t\!\!e^{-a_{V}\tau}\Ep{V}d\tau\!-\!\frac{c_{V}}{a_V}\!\prt{1\!-\!e^{a_Vt}}.
\end{align*}
We recall that by \eqref{eq:solution_ODE_SIS}, the function $\Ep{V}$ satisfies:
$$
\Ep{V(x(t))}\le d_V e^{\frac{a_V}{2}t}+m_V,
$$
where $d_V\!=\!\Ep{V_0}-\frac{\mu(\sigma^2+m^2)}{\mu+2n_0 \delta-2n_0 \beta\bar\lambda_1}$ and  $m_V\!=\!\frac{\mu(\sigma^2+m^2)}{\mu+2n_0 \delta-2n_0 \beta\bar\lambda_1}$. Then $\Ep{V^2}$ satisfies:
\begin{align}
    \Ep{V^2}&\le e^{a_{V}t}\Ep{V_0^2}-\frac{c_{V}}{a_V}\prt{1-e^{a_Vt}}\nonumber\\
    &\quad+b_Ve^{a_{V}t}\int_0^te^{-a_{V}\tau}\prt{d_V e^{\frac{a_V}{2}\tau}+m_V}d\tau\nonumber\\
    &=\prt{\Ep{V_0^2}+\frac{c_V}{a_V}+\frac{2b_Vd_V}{a_V}+\frac{b_Vm_V}{a_V}}e^{a_Vt}\nonumber\\
    &\quad-\frac{2b_Vd_V}{a_V}e^{\frac{a_V}{2}t}-\prt{\frac{c_V}{a_V}+\frac{b_Vm_V}{a_V}}.\label{eq:solution_ODE_SIS_second_moment}
\end{align}
Finally, since $a_V$ is negative, the result follows by taking the limit $t\to\infty$ in \eqref{eq:solution_ODE_SIS_second_moment} and using $\bar\lambda_1=p(n_0-1)$, $\delta=\bar\delta$ and $\beta=n_0^{-1}\bar\beta$.
\end{proof}

The bound \eqref{eq:second_moment} is derived using \eqref{eq:SIS_limit}, which corresponds to the asymptotic behavior of $\Ep{V(x(t))}$. Unlike the previous result \eqref{eq:SIS_limit}, the value of \eqref{eq:second_moment} is also affected by the fourth moment of $\Theta$, while \eqref{eq:SIS_limit} only depends on the second moment ($\Ep{\Theta^2}=\sigma^2+m^2$).

The previous bound on the second moment can also be used to estimate the variance   
\begin{align*}
 \limsup_{t\to\infty} \Var(V)&=   \limsup_{t\to\infty} \prt{\Ep{V^2}-\prt{\Ep{V}}^2}\\
 &\le \limsup_{t\to\infty} \Ep{V^2},
\end{align*}
although the result is likely to be conservative since  the effect of  $-\prt{\Ep{V(x(t))}}^2$ is neglected.

\section{Numerical and Simulation Results}

To illustrate the main results of this paper, we consider a SIS epidemic over a 
graph with $n_0=50$ agents generated with a probability $p=0.5$. The parameters of the epidemic are chosen as $\beta_n=\bar\beta/n=0.1/n$ and $\delta_n=\bar\delta=1.5p\bar\beta$, and the Poisson processes for arrivals, departures and replacements have a rate $\mu_a=\mu_d=\mu=7$. The states of the arriving agents are generated with a uniform distribution with $m=1/2$ and $\sigma^2=1/12$. Fig.~\ref{fig:star_graph_simple} presents a realization of the stochastic process. In the top plot, we show the evolution of $V(x(t))$ where the aggregate function jumps during the occurrence of arrivals (dashed green line) and departures (dash-dotted red line). In the bottom plot, we can see the changes of the number of agents during the evolution of the epidemic, which remains around the initial number $n_0=50$.

\begin{figure}
\centering
{ \includegraphics[width=\linewidth]{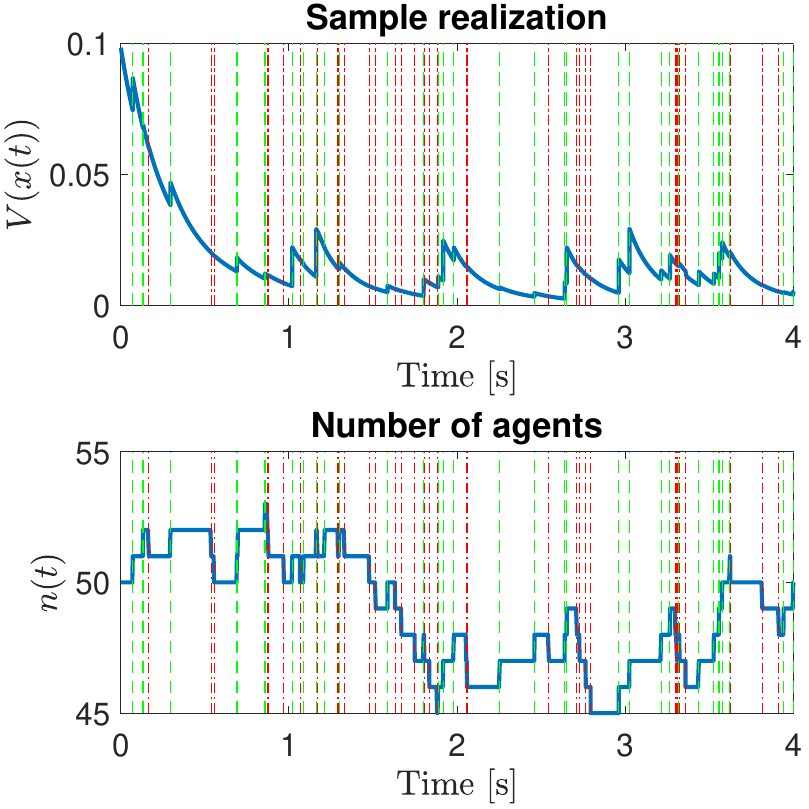}
}
\vspace{-0.6cm}
\caption{Sample realization of the aggregate function $V(x(t))$ 
and the number of agents $n(t)$ for a graph with $n_0=50$ agents and probability $p=0.5$. The SIS epidemic is considered with $\beta_n=0.1/n$ and $\delta_n=0.15p$, and the rate of the Poisson processes is $\mu_a=\mu_d=\mu=7$. The arrival and departure instants correspond to the green and red lines respectively.}\label{fig:star_graph_simple}
\vspace{-0.4cm}
\end{figure}

In Fig.~\ref{fig:moments_linear_model_SIS}, we present the computations of the moments of $V(x(t))$. 
The top plot corresponds to the evolution of $\Ep{V(x(t))}$ where the solid blue line is the original process, the dashed red line is the approximate process and the dash-dotted yellow line corresponds to the upper bound \eqref{eq:SIS_limit}.
Similarly, the middle plot shows the evolution of $\Ep{V^2(x(t))}$ and the bottom plot presents the evolution of Var$(V(x(t)))$.
Notice that the behavior of $\Ep{V(x(t))}$, $\Ep{V^2(x(t))}$ and $\text{Var}(V(x(t)))$ of the process with arrivals/departures and the process with replacements are really close. 
\begin{figure}[!ht]
\centering
{ \includegraphics[width=\linewidth]{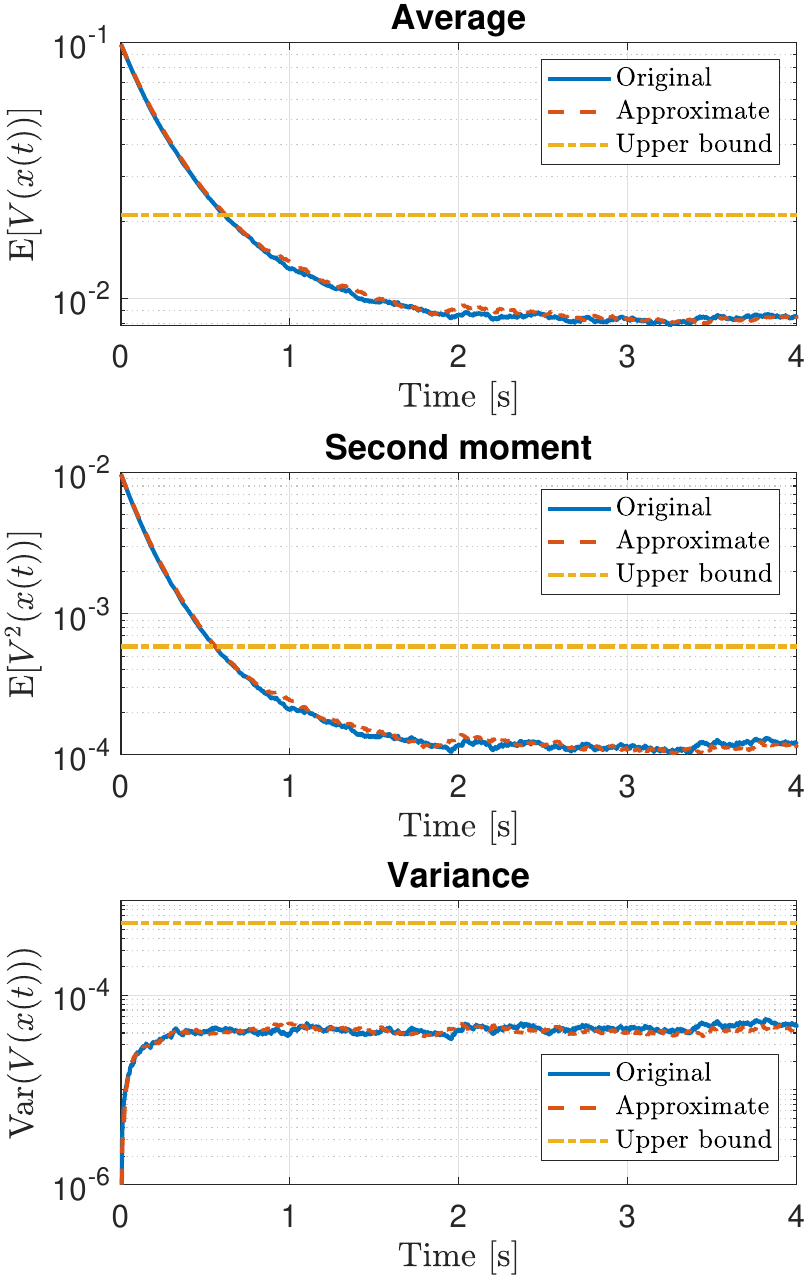}\label{fig:second_moment_V}
}
\vspace{-0.6cm}
\caption{Evolution of the moments of the aggregate function $V(x(t))$ 
for a graph with $n_0=50$ agents and probability $p=0.5$. The SIS epidemic is considered with $\beta=0.1/n$ and $\delta=0.15p$, and the rate of the Poisson processes is $\mu_a=\mu_d=\mu=7$. In the top plot, the solid blue line and dashed red line correspond to the estimation of $\Ep{V(x(t))}$ for the original and approximate process respectively, while the dash-dotted yellow line is the upper bound \eqref{eq:SIS_limit}.
In the middle plot, the solid blue line and dashed red line correspond to the estimation of $\Ep{V^2(x(t))}$ for the original and approximate process respectively, while the dash-dotted yellow line is the upper bound \eqref{eq:second_moment}. In the bottom plot, the solid blue line and dashed red line correspond to the estimation of $\Var(V(x(t)))$ for the original and approximate process respectively, while the dash-dotted yellow line is the upper bound \eqref{eq:second_moment}. 
The simulated values were obtained considering 1000 realizations of the process. 
}\label{fig:moments_linear_model_SIS}
\vspace{-0.4cm}
\end{figure}

\section{Conclusion}

In this paper, we analyzed a SIS epidemic in open multi-agent systems by using replacements to approximate a process with similar rates of arrivals and departures, and a fixed network to approximate time-varying graphs that are sampled from the same distribution.
We defined an aggregate function to analyze the epidemic and derived upper bounds for the asymptotic values of the expectation and variance as functions of the parameters of the infection and replacement processes. Through simulations we showed that the original process (i.e., with arrivals and departures) and the approximate process (i.e., with replacements) are really close, such that replacements can be used as a tool to analyze real open systems.

For future work, we would like to consider the more general and challenging case when arrivals and departures have different rates \cite{monnoyer2020open}, which would not preserve the initial number of agents in expectation. 
Finally, it would be important to extend the results to process where the connections of new agents are given by a more complex distribution. In this direction, graphons, which have already been used to study epidemics in large networks~\cite{vizuete2020graphon}, appear to be a promising model to generate network topologies that are consistent across time \cite{vizuete2022contributions}.

\bibliographystyle{IEEEtran}
\bibliography{SIS.bib}

\end{document}